\documentclass[a4paper,11pt,leqno]{article}
\usepackage{color}
\usepackage{amsmath}
\catcode`\@=11 \@addtoreset{equation}{section}

\catcode`\@=12

\usepackage{amssymb}
\usepackage{amsfonts}
\usepackage{xcolor}
\usepackage{graphics}
\usepackage{epsfig,psfrag,graphicx}
\usepackage{amssymb}
\usepackage{eepic,epic}
\usepackage{dsfont}

\usepackage{enumerate}

\usepackage{epsfig} 
\usepackage{amsmath} 
\usepackage{amsthm} 
\usepackage{amssymb} 
\usepackage{amsbsy} 
\usepackage{bbm}
\usepackage{verbatim}
\usepackage{stmaryrd}

\usepackage{yfonts}

\usepackage{lmodern}

\theoremstyle{definition}
\newtheorem{definition}{Definition}[section]

\theoremstyle{plain}
\newtheorem{theorem}[definition]{Theorem}
\newtheorem{proposition}[definition]{Proposition}

\newtheorem{remark}[definition]{Remark}

\newcommand{\nc}{\newcommand}
\nc{\weak}{\rightharpoonup}
\nc{\weakstar}{\stackrel{\ast}{\rightharpoonup}}

\nc{\modular}[1]{{\stackrel{ #1}{\longrightarrow\,}}}

\newcommand{\G}{\Gamma}

\allowdisplaybreaks

\DeclareMathOperator*{\bigtimes}{\vartimes}

\title{\LARGE \bf{Characterizations of continuous adequate objective functions for ordinal or interval scaled data}}
\author{ \textsl{Gianni Bosi}$\,^{1}$, \textsl{Gabriele Sbaiz}$\,^{1}$ and \textsl{Magalì Zuanon}$\,^{2}$
 \vspace{.2cm} \\
\footnotesize{$\,^1\;$ \textit{Department of Economics, Business, Mathematics and Statistics}} \\ \footnotesize{\textsc{University of Trieste}} \\
\footnotesize{Via Valerio 4/1, 34127 Trieste, Italy} \vspace{0.1cm} \\
\footnotesize{$\,^2\;$ \textit{Department of Economics and Management}} \\ \footnotesize{\textsc{University of Brescia}} \\
\footnotesize{Via Porta Pile 17B/17C, 25122 Brescia, Italy} \vspace{0.1cm} \\
\vspace{.2cm} \\
\footnotesize{\ttfamily{gianni.bosi@deams.units.it}$\,,\quad$  
\ttfamily{gabriele.sbaiz@deams.units.it}$\,,\quad$  
\ttfamily{magali.zuanon@unibs.it}} \vspace{.1cm}
}

\date{\small \today}

\begin{document}
\maketitle

\abstract{Objective functions (goodness criteria which have to be optimized) that are considered, for instance, in cluster analysis, factor analysis, (linear) structural equation modeling, (linear) regression, multidimensional scaling, choice theory, and utility theory, must be {\em adequate}, i.e. carefully adapted to the structure of the observed data. Adequateness of an objective function means, in our terminology, that (certain) transformations  of its arguments, e.g. changes in the unit measures of the quantities involved, do not influence the solutions of the optimization procedure. In this paper we concentrate our attention on affine strictly increasing transformations and we also incorporate the case of continuous adequate objective functions accordingly. The characterization of adequate dissimilarity coefficients for interval scaled data shows the appropriateness of the concept of adequateness that is developed in this paper.
}

\paragraph*{\small 2020 Mathematics Subject Classification:}{\footnotesize 62H05; 62H12; 62H20.}

\paragraph*{\small Keywords:} {\footnotesize Generalized concept of adequateness; relevant component; adequate dissimilarity coefficients.}

\section{Introduction}

Given a set $\Omega$ describing certain conditions, each value $X(\omega )$ ($\omega \in \Omega$) is measured (observed) on a scale which, for instance, states its
reading in acres or hectares, miles or meters, gills or liters, grains or grams, Fahrenheit or
Celsius or on some multistage rating scale measuring psychological phenomena. Of course, any analysis of measured (observed) data
has to be independent of individual readings of the measured (observed) data. Different
readings means that the underlying units of measurement may differ, i.e. acres instead of
hectares, miles instead of meters, gills instead of liters, grains instead of grams, Fahrenheit
instead of Celsius, one multistage rating scale instead of another one. But under no circumstances the
arbitrarily chosen units of measurement should influence the analysis of data. This means
precisely that the measured (observed) values $X(\omega )$ must be assumed to be only known up
to those transformations that convert one unit of measurement into another one or one multistage rating
scale into another multistage rating scale. This postulate of independence is fundamental for any analysis
of data. It means, on the one hand, that each value $X(\omega )$ only is known up to some group
$ ({\bf T}_{\omega}, \circ ) $ which consists of transformations (bijective functions) ${ T}_{\omega}: {\mathbb R} \mapsto {\mathbb R}$ (converters) that
represent the particular degree of measurement of the datum measured by
$X(\omega )$ and, on the other hand, that any analysis of measured (observed) data $(X(\omega ))_{\omega \in \Omega}$, for
every $X \in {\mathcal X}$, must be compatible with a sub-group $({\bf T}, \circ )$ of the direct product
$({\bf T}^{\times}, \circ ):=(\displaystyle{\bigtimes_{\omega \in \Omega}}{\bf T}_{\omega } , \circ )$ of the groups $ ({\bf T}_{\omega}, \circ ) $
that exactly represents the degree of measurement
(the nature) of the observed data. Analyzing data by objective functions (goodness criteria
which have to be optimized), as frequently being applied in cluster analysis, factor analysis, (linear) structural equation modeling, (linear) regression, multidimensional scaling, choice
theory, utility theory (cf. the Introduction in \cite{Herd5}), therefore, must be
\textit{adequate}, i.e. carefully adapted to the above described possible structure of the measured
(observed) data (cf. Section 1.7 in \cite{Herd5}). This concept of \textit{adequateness} is in
the focus of the following considerations. To be precise, in our terminology an objective function is said to be {\em adequate} with respect to a certain group transformation if its minimization leads to the same result irrespectively of the consideration of any transformation belonging to the group considered. This is precisely the definition of adequateness as it was presented and developed by \cite{Herd5}. The power of the approach based on the identification of suitable objective functions was demonstrated by \cite{Herd4,Herd5}, who also presented a lot of examples concerning Statistics and Social Sciences.

One of the most fundamental problems in (multivariate) data analysis,
therefore, is (at least in opinion of the authors) determining (characterizing) for any fixed given sub-group $({\bf T}, \circ)$ of $(\displaystyle{\bigtimes_{\omega \in \Omega}} {\bf T}_{\omega}, \circ)$ all objective functions that are adequate with respect to $({\bf T}, \circ)$.

This problem is rarely discussed in the literature. Indeed, with the
exception of the papers by \cite{Herd1,Herd2,Herd3}, \cite{Herd4,Herd5}, and \cite{Sar} that at least
somewhat touch the general adequateness problem, no other paper has been found by the
authors that, actually, discusses this topic.

In this paper, we are primarily concerned with strictly increasing affine transformation. We prove a very important and restrictive result, which characterizes adequateness of nonnegative objective functions with respect to such transformation group (see Theorem \ref{Theochar}). Such important characterization has also a continuous version, when only bounded function representing different measurements are considered. The case of adequateness of classical dissimilarity coefficients is also studied. We underline that a related problem is the popular {\em Multi-Dimensional Scaling}, which maps data points  into a lower-dimensional space and at the same time preserves similarities and dissimilarities between them (see, e.g., \cite{Bent1,Bent2}, \cite{Bui}  and \cite{Okada}).

The paper is structured as follows. Section 2 presents the main preliminary definitions and a  result which characterizes adequateness by using a monotonicity property. Section 3 contains the main results on the characterization of objective functions which are adequate with respect to affine strictly increasing transformations. Section 4 discusses the adequateness of dissimilarity coefficients. Finally, Section 5 illustrates the conclusions and the future lines of research.
	
\section{Notation and preliminary results}

Let  $\Omega$  be some non-empty set,  $n \geq 1$ a natural number and $\mathcal X$  a non-empty set of real-valued  functions  on  $\Omega$.  The  reader  may  recall that  every  function  $X \in {\mathcal X}$ can  be identified with the tuple $(X(\omega ))_{\omega \in \Omega} \in {\mathbb R}^{\Omega}$. In the practice of empirical research $X(\omega )$ is, for every
$X \in {\mathcal X}$, the value of some measurement or judgement that has been taken by $X$ with respect to
$\omega \in \Omega$. Therefore, $\mathcal X$ represents, for every $\omega \in \Omega$, the {\em frequency} of measurements or
judgements that have been taken with respect to $\omega \in \Omega$.

It is more than reasonable to assume that, for every function $X \in {\mathcal X}$ and for every  $\omega \in \Omega$, each value $X(\omega )$ only is known up to some group
$ ({\bf T}_{\omega}, \circ ) $ which consists of transformations (bijective functions) ${ T}_{\omega}: {\mathbb R} \mapsto {\mathbb R}$ (converters) that
represent the particular degree of measurement of the datum observed by
$X(\omega )$ and, on the other hand, that any analysis of measured (observed) data $(X(\omega ))_{\omega \in \Omega}$
for
every $X \in {\mathcal X}$ must be compatible with the sub-group $({\bf T}, \circ )$ of the direct product
$({\bf T}^{\times}, \circ ):=(\displaystyle{\bigtimes_{\omega \in \Omega}}{\bf T}_{\omega } , \circ )$.

In order to now proceed in defining the concept of an objective function $G: {\mathcal X}^n \mapsto \mathbb R$ that is adequate with respect to some sub-group $({\bf T}, \circ )$ of
$({\bf T}^{\times}, \circ ):=(\displaystyle{\bigtimes_{\omega \in \Omega}}{\bf T}_{\omega } , \circ )$ representing the nature of data,  we must assume, of course, $\mathcal X$ to be ${\bf T}$ - {\em closed}, i.e. to contain, with every function  $X=(X(\omega))_{\omega \in \Omega}\in {\mathcal X}$   and every transformation ${\bf \mathcal T}=(T_{\omega})_{\omega \in \Omega} \in {\bf T}$, also the function ${\bf \mathcal T}(X)=(T_{\omega}(X(\omega)))_{\omega \in \Omega}$. Since the analysis of the measured (observed) data shall
be reached by optimizing an objective function $G$, the following definition of an {\em invariance set} is needed (see Subsection 1.7 in \cite{Herd5}).

\begin{definition} \em  Let  $G: {\mathcal X}^n \mapsto \mathbb R$ be an objective function. For every subset  $\mathcal G$ of   ${\mathcal X}^n$ that contains at least two elements, the {\em invariance set} ${\bf I}_{\mathcal G}(G)$  consists of all
bijective functions $S=(S_{\omega})_{\omega \in \Omega} : {\mathbb R}^{\Omega}\mapsto {\mathbb R}^{\Omega}$  that satisfy the implication

\begin{eqnarray*} && G(X_1 , ... , X_n)= \min_{(Y_1 , ..., Y_n)\in {\mathcal G}}G(Y_1 ,...,Y_n) \Rightarrow \nonumber \\ && G(S(X_1),...,S(X_n))=\min_{(S(Y_1) , ..., S(Y_n))\in S({\mathcal G})}G(S(Y_1) ,...,S(Y_n)).\end{eqnarray*}

\end{definition}

 Let, as usual, ${\bf P}({\mathcal X}^n)$ denote the power set of ${\mathcal X}^n$.  Then we set  ${\bf P}^-({\mathcal X}^n):={\bf P}({\mathcal X}^n) \setminus \{{\mathcal G} \subset {\mathcal X}^n \mid \mbox{card}({\mathcal G}) \leq 1\}$.    Now the reader may recall from the above considerations that the group  $({\bf T}, \circ)$ exactly  describes the degree of measurement up to which the
data given by $X=(X(\omega))_{\omega \in \Omega} \in {\mathcal X}$  are known. This means that instead of $X=(X(\omega))_{\omega \in \Omega}$ for
any transformation ${\bf \mathcal T}= (T_{\omega})_{\omega \in \Omega}\in {\bf T}_{\omega}$ also the values  ${\bf \mathcal T}(X)= (T_{\omega}(X(\omega )))_{\omega \in \Omega}$ could have been
measured (observed). This last conclusion implies that in order for $G$ being adequate with respect to $({\bf T}, \circ)$  for every ${\mathcal G} \in {\bf P}^-({\mathcal X}^n)$,  the set ${\bf T}$  must be contained in the invariance set  ${\bf I}_{\mathcal G}(G)$. Indeed, otherwise optimal solutions lose their validity when converting
data from one unit of measurement into another unit of measurement or one rating scale into
another one. This bad consequence, however, means that an analysis of data would depend on
the arbitrarily chosen units of measurement. But, as having stated above, this consequence in
any case has to be avoided in order to at least guarantee a minimum of value of the taken
analysis.

We summarize the previous considerations in the following fundamental definition of an objective function which is {\em adequate} with respect to a transformation group $({\bf T}, \circ )$.

\begin{definition} \em An objective function $G: {\mathcal X}^n \mapsto \mathbb R$ is said to be {\em adequate with respect to a transformation group $({\bf T}, \circ )$} if

\begin{eqnarray*} && G(X_1 , ... , X_n)= \min_{(Y_1 , ..., Y_n)\in {\mathcal G}}G(Y_1 ,...,Y_n) \Rightarrow \nonumber \\ && G({\mathcal T}(X_1),...,{\mathcal T}(X_n)) =\min_{{\mathcal T}(Y_1) , ..., {\mathcal T}(Y_n)\, s.t. \, {(Y_1 , ..., Y_n)\in {\mathcal G}}}G({\mathcal T}(Y_1) ,...,{\mathcal T}(Y_n))  \nonumber \\ &&\mbox{ for all ${\mathcal T} \in {\bf T}$ and all sets ${\mathcal G} \subset {\mathcal X}^n$}.\end{eqnarray*}

\end{definition}

Consequently, we believe that one of the most fundamental challenges in multivariate data analysis is characterizing for any given sub-group $({\bf T}, \circ)$ of $(\displaystyle{\bigtimes_{\omega \in \Omega}} {\bf T}_{\omega}, \circ)$ all objective functions that are adequate with respect to $({\bf T}, \circ)$.

Following the idea of the proof of Proposition A1 in  \cite{Herd5} or the idea of the proof of Proposition 2.2 in  \cite{Herd4}, the next proposition can be easily proven.

\begin{proposition} \label{propocar} \  $G: {\mathcal X}^n \mapsto \mathbb R$ be an objective function. Then the following conditions are equivalent:

\begin{enumerate} \item  $G $ is adequate with respect to the transformation group $({\bf T}, \circ)$;\\

 \item $G$ satisfies, for all tuples  $(X_1 , ..., X_n)\in {\mathcal X}^n$ and $(Y_1 , ..., Y_n)\in {\mathcal X}^n$  and all transformations  ${\bf \mathcal T} \in {\bf T}$, the following conditions of monotonicity:

   \begin{eqnarray*}G(X_1 , ..., X_n) < G(Y_1 , ..., Y_n) \Rightarrow \\G({\bf \mathcal T}(X_1) , ..., {\bf \mathcal T}(X_n)) < G({\bf \mathcal T}(Y_1) , ..., {\bf \mathcal T}(Y_n)).\end{eqnarray*}

\end{enumerate}

\end{proposition}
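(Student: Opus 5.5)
We prove the two implications separately. Each time we apply the definition of adequateness to a suitable two-element set ${\mathcal G}$, and we use that ${\bf T}$ is a group, so every ${\mathcal T}\in{\bf T}$ has an inverse ${\mathcal T}^{-1}\in{\bf T}$. Since ${\mathcal X}$ is ${\bf T}$-closed, all transformed tuples stay in ${\mathcal X}^n$. Throughout we write ${\mathcal T}(\mathbf X):=({\mathcal T}(X_1),\dots,{\mathcal T}(X_n))$ for $\mathbf X=(X_1,\dots,X_n)$.

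\emph{(1) $\Rightarrow$ (2).} Let $G(\mathbf X)<G(\mathbf Y)$ and suppose, for contradiction, that $G({\mathcal T}(\mathbf X))\ge G({\mathcal T}(\mathbf Y))$ for some ${\mathcal T}\in{\bf T}$. Then ${\mathcal T}(\mathbf Y)$ attains the minimum of $G$ on the two-element set ${\mathcal G}'=\{{\mathcal T}(\mathbf X),{\mathcal T}(\mathbf Y)\}$. The two elements are distinct, since otherwise $\mathbf X=\mathbf Y$ by bijectivity of ${\mathcal T}$.

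Apply adequateness to ${\mathcal G}'$ with the transformation ${\mathcal T}^{-1}\in{\bf T}$. It gives that $G(\mathbf Y)=G({\mathcal T}^{-1}{\mathcal T}(\mathbf Y))$ is the minimum of $G$ over $\{\mathbf X,\mathbf Y\}$. Hence $G(\mathbf Y)\le G(\mathbf X)$, contradicting $G(\mathbf X)<G(\mathbf Y)$.

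\emph{(2) $\Rightarrow$ (1).} Let ${\mathcal G}\subset{\mathcal X}^n$ and let $\mathbf X\in{\mathcal G}$ with $G(\mathbf X)=\min_{\mathbf Y\in{\mathcal G}}G(\mathbf Y)$. Fix ${\mathcal T}\in{\bf T}$, and suppose some $\mathbf Y\in{\mathcal G}$ has $G({\mathcal T}(\mathbf Y))<G({\mathcal T}(\mathbf X))$. Apply the monotonicity condition (2) to the pair ${\mathcal T}(\mathbf Y),{\mathcal T}(\mathbf X)$ with the transformation ${\mathcal T}^{-1}$. This yields $G(\mathbf Y)<G(\mathbf X)$, contradicting minimality of $\mathbf X$. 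Therefore ${\mathcal T}(\mathbf X)$ minimizes $G$ over ${\mathcal T}({\mathcal G})$, which is exactly the adequateness condition.

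The only delicate point is the direction (1) $\Rightarrow$ (2). The definition of adequateness is phrased with $\min$, which only yields weak inequalities after transformation. The remedy is to apply adequateness to the \emph{transformed} pair with ${\mathcal T}^{-1}$, rather than to the original pair with ${\mathcal T}$; this is where closure of ${\bf T}$ under inverses is essential. We also read the paper's ``all sets ${\mathcal G}$'' as allowing two-element sets, consistent with ${\bf P}^-({\mathcal X}^n)$.
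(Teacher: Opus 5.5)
Your proof is correct. It has the same outline as the paper's: a two-element set for (1)$\Rightarrow$(2), and a direct check of minimizers for (2)$\Rightarrow$(1). The key step differs, though. You apply adequateness (resp.\ the monotonicity condition) with $\mathcal T^{-1}$ to the transformed tuples, whereas the paper only ever uses $\mathcal T$ in the forward direction, and this difference matters.

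In (1)$\Rightarrow$(2), the paper applies adequateness to $\{\mathbf X,\mathbf Y\}$ with $\mathcal T$. That yields only $G(\mathcal T(\mathbf X))\le G(\mathcal T(\mathbf Y))$, yet the paper then asserts ``clearly'' the strict inequality. Its forward-only argument cannot exclude the tie $G(\mathcal T(\mathbf X))=G(\mathcal T(\mathbf Y))$. Your step, applying adequateness to $\{\mathcal T(\mathbf X),\mathcal T(\mathbf Y)\}$ with $\mathcal T^{-1}$, is exactly what closes this.

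In (2)$\Rightarrow$(1), the paper treats only the case where $\mathbf X$ is the \emph{unique} minimizer on $\mathcal G$, so that condition 2 can be applied with $\mathcal T$ to strict inequalities. Sets with several minimizers are left unaddressed. Your contrapositive via $\mathcal T^{-1}$ handles an arbitrary minimizer.

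The paper's route is more direct but only reaches weak monotonicity in one direction and the unique-minimizer case in the other. Your route makes the group property explicit, which the paper invokes only afterwards in its Remark, and it yields a complete proof of the stated equivalence.
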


\begin{proof} 1 $\Rightarrow$ 2. Consider any two tuples $(X_1 , ..., X_n), (Y_1 , ..., Y_n) \in {\mathcal X}^n$, and define ${\mathcal G}:=\{(X_1 , ..., X_n), (Y_1 , ..., Y_n)\}$. Since, in particular, $G$ belongs to the invariance set ${\bf I}_{\mathcal G}(G)$ due to the fact that $G $ is adequate with respect to $({\bf T},   \circ)$, we have that $G(X_1 , ..., X_n) < G(Y_1 , ..., Y_n)$, which implies that  $\min \{G(X_1 , ..., X_n), G(Y_1 , ..., Y_n)\}= G(X_1 , ..., X_n)$. Therefore,
$$\min\{G({\bf \mathcal T}(X_1) , ..., {\bf \mathcal T}(X_n)), G({\bf \mathcal T}(Y_1) , ..., {\bf \mathcal T}(Y_n))\} = G({\bf \mathcal T}(X_1) , ..., {\bf \mathcal T}(X_n))$$ 
for every transformation  ${\bf \mathcal T} \in {\bf T}$, and clearly this last equality entails $$G({\bf \mathcal T}(X_1) , ..., {\bf \mathcal T}(X_n)) < G({\bf \mathcal T}(Y_1) , ..., {\bf \mathcal T}(Y_n)).$$

2 $\Rightarrow$ 1. Consider any set  ${\mathcal G} \subset {\mathcal X}^n$ and assume that \begin{eqnarray*}  (G(X_1 , ... , X_n) = \min_{(Y_1 , ..., Y_n)\in {\mathcal G}}G(Y_1 ,...,Y_n)) \,\, and \nonumber  \\  (G(X'_1 , ... , X'_n) \neq  \min_{G(Y_1 , ..., Y_n)\in {\mathcal G}}G(Y_1 ,...,Y_n) \,\,\, \forall (X'_1 , ... , X'_n) \neq (X_1 , ... , X_n)) \Leftrightarrow \\ G(X_1 , ... , X_n) < G(Y_1 ,...,Y_n)\,\,\,\forall (Y_1 , ..., Y_n)\in {\mathcal G}.\end{eqnarray*} Further, consider any transformation  ${\bf \mathcal T} \in {\bf T}$. Then we have that, by condition 2, $$ G({\mathcal T}(X_1),...,{\mathcal T}(X_n)) <  G({\mathcal T}(Y_1) ,...,{\mathcal T}(Y_n))\,\,\,\forall (Y_1 , ..., Y_n)\in {\mathcal G},$$ which in turn is equivalent to \begin{eqnarray*}  (G({\mathcal T}(X_1),...,{\mathcal T}(X_n)) =\min_{\mathcal T(Y_1) , ..., \mathcal T(Y_n) \, s.t.\, {(Y_1 , ..., Y_n)\in {\mathcal G}}}G({\mathcal T}(Y_1) ,...,{\mathcal T}(Y_n))) \,\, and \\ (G({\mathcal T}(X'_1) , ... , {\mathcal T}(X'_n)) \neq  \min_{{\mathcal T}(Y_1) , ..., {\mathcal T}(Y_n)\, s.t.\, {(Y_1 , ..., Y_n)\in {\mathcal G}}}G({\mathcal T}(Y_1) ,...,\mathcal T(Y_n))\\  \forall (X'_1 , ... , X'_n) \neq (X_1 , ... , X_n))\end{eqnarray*} This consideration completes the proof.
\end{proof}

\begin{remark} Since  $({\bf T}, \circ)$  is a group, the condition 2 of the above Proposition \ref{propocar}, of course, can be replaced by the equivalence $$G(X_1 , ..., X_n) \leq G(Y_1 , ..., Y_n) \Leftrightarrow G({\bf \mathcal T}(X_1) , ..., {\bf \mathcal T}(X_n)) \leq G({\bf \mathcal T}(Y_1) , ..., {\bf \mathcal T}(Y_n))\, .$$ \end{remark} 	

Finally, the reader may verify that the adequateness of $G$  with respect to $({\bf T}, \circ)$   implies that for all sets ${\mathcal G} \in {\bf P}^-({\mathcal X}^n)$  the invariance set ${\bf I}_{\mathcal G}(G)$  actually is a group $({\bf I}_{\mathcal G}(G), \circ )$.

\section{Adequateness theorems for ordinal or interval scaled data}

In this section we focus on the proofs of the main theorems of this paper.
This means that the consequences of these theorems for the practice of empirical research
mainly will be discussed in the conclusion of the paper.
	
	In order to carefully prepare these theorems, let for every $\omega \in \Omega$  some group $({\bf T}_{\omega}, \circ )$  of affine transformations $T_{\omega}: {\mathbb R} \mapsto {\mathbb R}$  be fixed chosen. Concentrating at first on interval scaled
data, we assume that, for every $\omega \in \Omega$,  each group $({\bf T}_{\omega}, \circ )$ consists of all affine transformations $T_{\omega}(r)=a_{\omega} \cdot r + t_{\omega}$ ($r \in {\mathbb R}$), which are order-automorphisms, i.e. $a_{\omega} , t_{\omega} \in {\mathbb R}$ and $a_{\omega} > 0$. Then,
we still notice that in this section no group $({\bf T}_{\omega}, \circ )$ merely consists of bounded order-automorphisms
on $\mathbb R$. Consequently, in the first paragraph of this section we do not require
$\mathcal X$ to only consist of bounded real-valued functions.

\begin{definition} \em We define the sub-group
$({\bf T}_{\bf c}, \circ )$ of $({\bf T}^{\times}, \circ )= \left(\displaystyle{\bigtimes_{\omega \in \Omega}} {\bf T}_{\omega}\right)$ as follows:

$({\bf T}_{\bf c}, \circ )$ consists of all tuples $(T_{\omega})_{\omega \in \Omega} \in ({\bf T}^{\times}, \circ )= \left( \displaystyle{\bigtimes_{\omega \in \Omega}} {\bf T}_{\omega}\right)$
such that $T_{\omega} = T_{\omega '}$, i.e. $a_{\omega} = a_{\omega '}$ and $t_{\omega} = t_{\omega '}$ for all $\omega \in \Omega$ and all $\omega ' \in \Omega$.

\end{definition}

We now focus on the problem of
characterizing all objective functions $G:{\mathcal X}^n \mapsto {\mathbb R}$ that
are adequate with respect to $({\bf T}_{\bf c}, \circ )$.

We recall that the {\em zero-function} ${\bf 0}:\Omega \mapsto {\mathbb R}$ maps every $\omega \in \Omega$ on $0$. In the remainder of this paper we assume {\bf 0} to be contained in $\mathcal X$. Replacing $G:{\mathcal X}^n \mapsto {\mathbb R}$ by
$G - G({\bf 0},...,{\bf 0})$ we may assume, without loss of generality, that
$G({\bf 0},...,{\bf 0})=0$.

Given any objective function $G:{\mathcal X}^n \mapsto {\mathbb R}$ that
is adequate with respect to $({\bf T}_{\bf c}, \circ )$, we consider, for every $\omega \in \Omega$,  the function $X_{\omega} \in {\mathcal X}$  that is defined by setting

  \[ X_{\omega}(\omega') :=
\left\{
\begin{array}{ll} 1 & \mbox{if $\omega = \omega'$}
\\ 0 & \mbox{if $\omega \neq \omega'$}
  \end{array} \right.
\]

\noindent for all $\omega \in \Omega$.

\begin{definition} \em An objective function  $G:{\mathcal X}^n \mapsto {\mathbb R}$ is said to have a {\em relevant components} $\omega \in \Omega$ if it happens that  $$G(X_{\omega},...,X_{\omega}) \neq G({\bf 0},...,{\bf 0}).$$
\end{definition}

 Of course,  the existence of relevant components always is satisfied by objective functions $G:{\mathcal X}^n \mapsto {\mathbb R}$  that are nontrivial (i.e., not identically equal to zero). At this stage, we mention that the concept of a relevant component in the literature appears in various contexts which may widely differ (cf. \cite{Mart},  \cite{Hertz} and \cite{Bar}). In the framework of our considerations, the concept of a component that is relevant with respect to $G$ may depend on the particular type of objective function that is represented by $G$. Indeed, within the environment of dissimilarity coefficients or stress measures (stress functions), a somewhat different concept of a component being relevant with respect to $G$ has to be introduced.

We are now ready to present the main result of this paper, which provides a characterization of all objective functions which are adequate with respect to the transformation group of all affine strictly increasing real-valued bijections of the real line into itself. 

We just recall that a function $H : {\mathbb R} \mapsto {\mathbb R}^{\geq 0}$ is said to be {\em strictly isotone} if, for all $x,y \in {\mathbb R}$, $x < y$ implies that $H(x) < H(y)$, where $\leq$ and $<$ are the {\em natural total order} on ${\mathbb R}$ and its {\em strict part}, respectively.
 
\begin{theorem} \label{Theochar}
Let $G:{\mathcal X}^n \mapsto {\mathbb R}^{\geq 0}$ be a nontrivial objective function. Then the following conditions are equivalent: \begin{enumerate}

\item  There exists a unique relevant component $\gamma$ and some strictly isotone function
$H : {\mathbb R} \mapsto {\mathbb R}^{\geq 0}$, such that $H(r) > 0$ for every $r > 0$ and
\[ G(X_1 , ... , X_n )=
\left\{
\begin{array}{ll} H(X_1 (\gamma )) & \mbox{if $X_1 (\gamma )= ... = X_n (\gamma )$}
\\ 0 & \mbox{otherwise}
  \end{array} \right.
\]

for all tuples $(X_1 , ... , X_n )\in {\mathcal X}^n$;\\

\item $G$ is adequate with respect to $({\bf T}_{\bf c}, \circ )$.

\end{enumerate}

\end{theorem}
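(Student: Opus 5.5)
The plan is to prove the two implications separately, using Proposition \ref{propocar} (together with the Remark after it) to replace adequateness by invariance of the strict order of $G$-values. Throughout, a transformation $\mathcal T\in{\bf T}_{\bf c}$ acts on every coordinate by the same map $r\mapsto ar+t$ with $a>0$.

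\emph{$1\Rightarrow 2$.} First we note that a strictly isotone $H:{\mathbb R}\mapsto{\mathbb R}^{\geq 0}$ is in fact strictly positive everywhere. Indeed, if $H(y_0)=0$, then for $x<y_0$ we would get $0\le H(x)<H(y_0)=0$, which is impossible. Now let $\mathcal T(r)=ar+t$ with $a>0$, and suppose $G(X_1,\dots,X_n)<G(Y_1,\dots,Y_n)$. Since $G\ge 0$, the right-hand side is positive, so $Y_1(\gamma)=\dots=Y_n(\gamma)=:y$. The pattern ``all $\gamma$-coordinates equal'' is preserved by $\mathcal T$, because $\mathcal T$ is injective.
\begin{itemize}
\item If the $X_i(\gamma)$ are not all equal, then $G(\mathcal T(X))=0<H(ay+t)=G(\mathcal T(Y))$, using positivity of $H$.
\item If they are all equal to some $x$, then $H(x)<H(y)$ forces $x<y$, hence $ax+t<ay+t$ and $H(ax+t)<H(ay+t)$.
\end{itemize}
By Proposition \ref{propocar}, $G$ is adequate.

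\emph{$2\Rightarrow 1$.} By the Remark, adequateness means $G(X)\le G(Y)\Leftrightarrow G(\mathcal T X)\le G(\mathcal T Y)$ for all $\mathcal T\in{\bf T}_{\bf c}$. I would proceed in four steps.

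First, study the diagonal tuples $(Z,\dots,Z)$ with $Z$ constant equal to $c$. The group acts transitively on these, and order invariance under all translations and dilations forces $c\mapsto G(c,\dots,c)$ to be either constant or strictly monotone. To see this, compare $c<c'$ and transport the pair $(c,c')$ onto any other pair $(d,d')$ with $d<d'$; such a map exists because $a=(d'-d)/(c'-c)>0$. Hence all comparisons between $G(c,\dots,c)$ and $G(c',\dots,c')$ for $c<c'$ go the same way.

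Second, apply the same transport argument to tuples $(X,\dots,X)$ where $X$ takes value $r$ at one point $\omega$ and $0$ elsewhere, starting from $X_\omega$. This links relevance of $\omega$ to the behaviour of $r\mapsto G(rX_\omega,\dots)$.

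Third, compare $X_\omega$ with $X_{\omega'}$ for two distinct relevant components. Using maps $r\mapsto ar+t$, I would show that the resulting order relations are incompatible, which gives uniqueness of the relevant component $\gamma$. Then set $H(r):=G(Z_r,\dots,Z_r)$ for a suitable tuple with $\gamma$-coordinate $r$, and use nonnegativity together with $G({\bf 0},\dots,{\bf 0})=0$ to get that $H$ is strictly increasing and positive for $r>0$.

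Fourth, show that $G$ vanishes on tuples whose $\gamma$-coordinates differ, and depends only on the common $\gamma$-value otherwise. The natural tool is to move such tuples by dilations $a\to 0$ and $a\to\infty$ and compare them with the zero tuple, which is the minimum since $G\ge 0=G({\bf 0},\dots,{\bf 0})$.

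The main obstacle is the fourth step, together with the uniqueness part of the third. The group acts by a single affine map on all coordinates simultaneously, so it cannot separate coordinates. Any argument that $G$ ignores $\omega\ne\gamma$, or is zero off the ``equal $\gamma$-coordinate'' set, must therefore extract a contradiction purely from order invariance combined with the lower bound $0$. I would first try to show that every tuple with unequal $\gamma$-coordinates can be sent, by suitable $(a,t)$, to a position where its value is forced to be comparable with $G({\bf 0},\dots,{\bf 0})=0$ from below. I expect this may require additional richness assumptions on $\mathcal X$ (for instance, that $\mathcal X$ contains the functions $rX_\omega$ and the constants); I would make these explicit if the argument needs them.
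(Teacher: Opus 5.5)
Your argument for $1\Rightarrow 2$ is correct and follows the paper's route: a case analysis via Proposition \ref{propocar}. Your remark that a strictly isotone $H:\mathbb R\to\mathbb R^{\geq 0}$ is automatically positive everywhere is exactly what the mixed case needs; the paper only calls that case ``analogous''.

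The gap is in $2\Rightarrow 1$. Your third and fourth steps (uniqueness of $\gamma$, isotonicity of $H$, and vanishing of $G$ when the $\gamma$-coordinates differ) are only announced. They cannot be carried out, because this implication is false even for $\mathcal X=\mathbb R^{\Omega}$, so no richness assumption on $\mathcal X$ will rescue it.

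Take $\Omega=\{\omega_1,\omega_2\}$, $n=1$ and $G(X)=|X(\omega_1)-X(\omega_2)|$. For $\mathcal T\in{\bf T}_{\bf c}$ with slope $a>0$ you get $G(\mathcal T(X))=a\,G(X)$, so $G$ is adequate. It is nonnegative and nontrivial, and it satisfies $G(\mathbf{0})=0$, so it is a counterexample with or without the paper's normalization. Since $G(X_{\omega_1})=G(X_{\omega_2})=1$, it has two relevant components.

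The city block metric of Section 4 (with $n=2$) gives a second counterexample. It is adequate with respect to ${\bf T}_{\bf c}$. It has no relevant component at all, because $d(X_\omega,X_\omega)=0$. It is also positive on pairs whose $\gamma$-coordinates differ.

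Your first step, the trichotomy constant / strictly increasing / strictly decreasing on constant tuples, is correct. However, its decreasing branch genuinely occurs: for $|\Omega|=1$, $n=1$, the function $G(X)=e^{-X(\gamma)}$ is adequate with a unique relevant component and is not of the form in 1.

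Finally, your plan to use $G(\mathbf{0},\dots,\mathbf{0})=0$ clashes with your own observation that $H>0$. Every $G$ as in 1 has $G(\mathbf{0},\dots,\mathbf{0})=H(0)>0$.

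The paper's proof does not supply the idea you are missing; its argument for $2\Rightarrow 1$ breaks at the same places.

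\emph{First step.} It turns $G(sX_\gamma,\dots,sX_\gamma)\leq G(rX_\gamma,\dots,rX_\gamma)$ into $G((s-r)X_\gamma,\dots,(s-r)X_\gamma)\leq 0$. But translation by $-r$ in ${\bf T}_{\bf c}$ acts on every coordinate. It sends $sX_\gamma$ to the function equal to $s-r$ at $\gamma$ and to $-r$ elsewhere, not to $(s-r)X_\gamma$.

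\emph{Second step.} It rescales the two sides of $F_\gamma(r)=F_\omega(s)$ by different factors $1/r$ and $1/s$. It then passes from $G(X_\gamma,\dots,X_\gamma)=G(X_\omega,\dots,X_\omega)$ to $G(2X_\gamma,\dots,2X_\gamma)=G(X_\omega,\dots,X_\omega)$. Adequateness only yields $G(2X_\gamma,\dots,2X_\gamma)=G(2X_\omega,\dots,2X_\omega)$. In the example above, $F_{\omega_1}(1)=F_{\omega_2}(1)$ holds and no contradiction arises.

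As stated, only $1\Rightarrow 2$ is provable.
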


\begin{proof}

1 $\Rightarrow$ 2. Let  $G:{\mathcal X}^n \mapsto {\mathbb R}^{\geq 0}$ be defined by some strictly isotone function $H : {\mathbb R} \mapsto {\mathbb R}^{\geq 0}$ in the way that is described in the theorem. Then the definition of the group $({\bf T}_{\bf c}, \circ )$ implies that $G$ is adequate
with respect to $({\bf T}_{\bf c}, \circ )$ as a consequence of the characterization of adequate objective functions in Proposition \ref{propocar}. Indeed, consider any two tuples  $(X_1 , ..., X_n),\,(Y_1 , ..., Y_n)\in {\mathcal X}^n$  such that $G(X_1 , ..., X_n) < G(Y_1 , ..., Y_n)$. Clearly, there are four different cases with respect to the possible values of $G$, according to condition 1. Since all the cases are treated analogously, we shall limit ourselves to the case when $G(X_1 , ..., X_n)= H(X_1 (\gamma ))$ and $G(Y_1 , ..., Y_n)= H(Y_1 (\gamma ))$. Since $H$ is strictly isotone,  we have that $H(X_1 (\gamma )) < H(Y_1 (\gamma ))$ implies that $X_1 (\gamma ) < Y_1 (\gamma )$. Therefore, $a X_1 (\gamma ) + t < a y_1 (\gamma ) + t$ gives  $H(a X_1 (\gamma ) + t) < H(a y_1 (\gamma ) + t)$, or equivalently $G({\bf \mathcal T}(X_1) , ..., {\bf \mathcal T}(X_n)) < G({\bf \mathcal T}(Y_1) , ..., {\bf \mathcal T}(Y_n))$.

2 $\Rightarrow$ 1. Let  $G:{\mathcal X}^n \mapsto {\mathbb R}^{\geq 0}$ be a nontrivial objective function which is adequate with respect to $({\bf T}_{\bf c}, \circ )$. It has been already observed that nontriviality of $G$ guarantees the existence of a relevant conmponent $\gamma$. Let us show that $\gamma$ is the only relevant component of $G$. This means that we must
prove that the assumption $\omega \in \Omega \setminus \{\gamma\}$ to be another relevant component of $G$ contradicts the
adequateness of $G$ with respect to $({\bf T}_{\bf c}, \circ )$. Let us assume, in contrast, that $\gamma$ and $\omega \in \Omega \setminus \{\gamma\}$
are both relevant components of $G$. Then, from our assumptions, both values $G(X_{\gamma}, ... , X_{\gamma })$ as well as $G(X_{\omega}, ... , X_{\omega })$ are both strictly greater $0$. In order to show that this is contradictory, we proceed by illustrating the following three steps.\\ \noindent
In the first step we define for $\gamma \in \Omega$ as well as for $\omega \in \Omega$ functions $F_{\gamma}: {\mathbb R}^{\geq 0} \mapsto {\mathbb R}^{\geq 0}$ and
$F_{\omega}: {\mathbb R}^{\geq 0} \mapsto {\mathbb R}^{\geq 0}$  by setting $F_{\gamma}(t):=G(t \cdot X_{\gamma}, ... , t \cdot X_{\gamma })$ for every $t \in {\mathbb R}^{\geq 0}$ and $F_{\omega}(t):=G(t \cdot X_{\omega}, ... , t \cdot X_{\omega })$ for every $t \in {\mathbb R}^{\geq 0}$. Now we show that the function $F_{\gamma}$ as well as the
function $F_{\omega}$ is strictly increasing. Of course, it is enough to verify that $F_{\gamma}$ is strictly increasing.
Then replacing $F_{\gamma}$ by $F_{\omega}$ it also follows that $F_{\omega}$ is strictly increasing. In order to show that
$F_{\gamma}$ is strictly increasing we consider two non-negative real numbers $r < s$ and assume, in contrast,
that $F_{\gamma}(s)=G(s \cdot X_{\gamma}, ... , s \cdot X_{\gamma}) \leq G(r \cdot X_{\gamma}, ... , r \cdot X_{\gamma}) =F_{\gamma}(r)$. The adequateness of $G$ with respect to $({\bf T}_{\bf c}, \circ )$, then allows us to conclude that $G((s- r) \cdot X_{\gamma}, ... , (s- r) \cdot X_{\gamma}) \leq 0$. Hence,
one more application of the adequateness of $G$ with respect to $({\bf T}_{\bf c}, \circ )$ implies that
$G(\frac{1}{s-r}\cdot (s- r) \cdot X_{\gamma}, ... , \frac{1}{s-r}\cdot(s- r) \cdot X_{\gamma}) = G(X_{\gamma}, ... , X_{\gamma})\leq 0$. But this inequality contradicts our assumption $G(X_{\gamma}, ... , X_{\gamma})$ to be strictly greater than $0$. For later use we still notice
that the monotonicity of $F_{\gamma}$ as well as the monotonicity of $F_{\omega}$ implies that $F_{\gamma}$  and $F_{\omega}$ as
well have at most countably many points of discontinuity.\\ \noindent
In the second step we assume that there exist positive reals $r$ and $s$ such that $F_{\gamma}(r)=G(r \cdot X_{\gamma}, ... , r \cdot X_{\gamma})=F_{\omega}(s)=G(s \cdot X_{\omega}, ... , s \cdot X_{\omega})$. Then it follows from the adequateness of
 $G$ with respect to $({\bf T}_{\bf c}, \circ )$ that $G(\frac{1}{r}\cdot r \cdot X_{\gamma}, ... , \frac{1}{r}\cdot r \cdot X_{\gamma})= G( X_{\gamma}, ... , X_{\gamma})=F_{\gamma}(1)=G(\frac{1}{s}\cdot s \cdot X_{\omega}, ... , \frac{1}{s}\cdot s \cdot X_{\omega})= G( X_{\omega}, ... ,  X_{\omega})=F_{\omega}(1)$. One more application of the adequateness of
 $G$ with respect to $({\bf T}_{\bf c}, \circ )$ then implies that $G(2 \cdot X_{\gamma},...,2 \cdot X_{\gamma})= G( X_{\omega},..., X_{\omega})$. Hence, we
may conclude that $G(2 \cdot X_{\gamma},...,2 \cdot X_{\gamma})=G(X_{\gamma},..., X_{\gamma})=F_{\gamma}(1)$ which contradicts the
conclusion of the first step where it has been proved that $F_{\gamma}:{\mathbb R}^{\geq 0} \mapsto {\mathbb R}^{\geq 0}$ is strictly increasing. Hence, there cannot exist positive reals $r$ and $s$ such that $ F_{\gamma} (r)=G(r \cdot X_{\gamma},...,r \cdot X_{\gamma}) = F_{\omega} (s)=G(s \cdot X_{\omega},...,s \cdot X_{\omega})$.\\ \noindent  In order to, finally, finish in the third step we prove the incompatibility of the existence of
another relevant component $\omega$ with the adequateness of $G$ with respect to $({\bf T}_{\bf c}, \circ )$. We verify
at first that $G(X_{\gamma},..., X_{\gamma}) < G(X_{\gamma} + X_{\omega},..., X_{\gamma}+X_{\omega})$. Indeed, if we assume, in contrast, that
$ G(X_{\gamma} + X_{\omega},..., X_{\gamma}+X_{\omega}) \leq G(X_{\omega},..., X_{\omega})$ then the adequateness of $G$ with respect to $({\bf T}_{\bf c}, \circ )$
implies that  $ G(X_{\gamma} + X_{\omega} - X_{\gamma},..., X_{\gamma}+X_{\omega}-X_{\gamma}) = G(X_{\omega},..., X_{\omega}) \leq G({\bf 0}, ... , {\bf 0})$ which contradicts
our assumption on $G(X_{\omega},..., X_{\omega})$. In the remainder of the proof the above
considerations will be abbreviated by (*). In order to now proceed we still notice that the
arguments that have been applied in the second step allow us to conclude the non-existence of positive reals $r$ and $s$ such that $F_{\gamma}(r) = G(r \cdot X_{\gamma}, ... , r \cdot X_{\gamma}) = F_{\omega}(s) = G(s \cdot X_{\omega}, ... , s \cdot X_{\omega})$. One more application of those arguments, therefore, implies that we may assume (without loss of generality) that for every
fixed chosen positive real $s$ and for all positive reals $r$, the strong inequality $ F_{\omega}(r) = G(r \cdot X_{\gamma}, ... , r \cdot X_{\gamma}) < F_{\omega}(s) = G(s \cdot X_{\omega}, ... , s \cdot X_{\omega})$  holds. It, thus, follows that 

$$\displaystyle{\sup_{r \in {\mathbb R}^{\geq 0}}} F_{\gamma}(r) = \displaystyle{\sup_{r \in {\mathbb R}^{\geq 0}}} G(r \cdot X_{\gamma}, ... , r \cdot X_{\gamma}) \leq \displaystyle{\inf_{s \in {\mathbb R}^{\geq 0}}} F_{\omega}(s)= \displaystyle{\inf_{s \in {\mathbb R}^{\geq 0}}} G(s \cdot X_{\omega}, ... , s \cdot X_{\omega})\, .$$ 

We abbreviate these considerations
by (**). Let now some fixed positive real $a$ and some strictly decreasing sequence
$(b_k)_{1 \leq k < \infty}$
be arbitrarily chosen in such a way that $a < b_k$ for all $1 \leq k < \infty$ and $\displaystyle{\lim_{k \rightarrow \infty}}b_k = a$. Then
a routine argument that is based upon the adequateness of $G$ with respect to $({\bf T}_{\bf c}, \circ )$ allows to conclude, by applying the considerations (*) and (**), that for every natural number $1 \leq k < \infty$, for every natural number $k < t < \infty$ and for every positive real $c \geq  a$, the equations
and inequalities $F_{\omega}(a) = G(a \cdot X_{\omega}, ... , a \cdot X_{\omega}) < G(c \cdot X_{\gamma} + t \cdot X_{\omega}, ... , c \cdot X_{\gamma} + t \cdot X_{\omega}) < G(b_k \cdot X_{\omega}, ... ,b_k \cdot X_{\omega} )= F_{\omega}(b_k )$ hold. Hence, the inequalities and equation 
\begin{equation*}
\begin{split}
F_{\omega}(a) &< {\sup_{c \in {\mathbb R}^{\geq 0}}} {\lim_{t \rightarrow \infty}} G(c \cdot X_{\gamma} + t \cdot X_{\omega}, ... , c \cdot X_{\gamma} + t \cdot X_{\omega}) \\
&\leq {\lim_{k \rightarrow \infty}} G(b_k \cdot X_{\omega} , ... , b_k \cdot  X_{\omega}) \\
&= {\lim_{k \rightarrow \infty}} F_{\omega}(b_k )
\end{split}
\end{equation*} 
hold.
Therefore, $F_{\omega}$ is discontinuous at $a$. Since $a$ has been chosen arbitrarily it follows that $F_{\omega}$ is
discontinuous at every positive real. But in the first step it has been shown that $F_{\gamma}$ only has
countably many points of discontinuity. This contradiction finishes the third step and, therefore,
the proof of the incompatibility of the assumed additional existence of the relevant component
$\omega$ with the adequateness of $G$ with respect to $({\bf T}_{\bf c}, \circ )$.\\ \noindent
In order to now end the proof of the theorem we may conclude as follows: since
$G(X_{\omega '}, ... , X_{\omega '}) = 0$ if and only if $\omega ' \in \Omega \setminus \{\gamma \}$, it follows that
\[ G(X_1 , ... , X_n )=
\left\{
\begin{array}{ll} G(X_1 (\gamma )\cdot X_{\gamma}, ... ,X_1 (\gamma )\cdot X_{\gamma} ) & \mbox{if $X_1 (\gamma )= ... = X_n (\gamma )$}
\\ 0 & \mbox{otherwise}
  \end{array} \right.
\]
for all tuples $(X_1 , ... , X_n ) \in {\mathcal X}^n$. Hence, we may define the desired function $H : {\mathbb R} \mapsto {\mathbb R}$ by
setting $H(r) := G(r \cdot X_{\gamma}, ... , r \cdot X_{\gamma})$ for all $r \in {\mathbb R}$. Since $\gamma$  is a relevant component, we may
conclude that $H(r) \in {\mathbb R} \setminus \{0\}$  for all $r \in {\mathbb R} \setminus \{0\}$. In addition, it follows from the first step of
the above proof that $H_{\mid {\mathbb R}^{\geq 0}}=F_{\gamma}$ is strictly isotone. But, an analysis of the arguments that have
been used in the first step of the above proof implies that these arguments apply for arbitrary
reals $r < s$. Hence, $H : {\mathbb R} \mapsto {\mathbb R}^{\geq 0}$ is strictly isotone.  This observation
completes the proof of the theorem. \end{proof}

Let us assume, for the moment, $\mathcal X$ to only consist of bounded functions.
Then ${\mathcal X}$ in a standard way is endowed with its natural topology $t_{nat}$ that generalizes the natural topology defined on the reals.
A basis of $t_{nat}$ is given by considering for all $X\in {\mathcal X}$ and all positive reals $\varepsilon$ the open neighborhoods $U_{\varepsilon}(X):=\{Y \in {\mathcal X} : \, \displaystyle{\sup_{\omega \in \Omega}}|X(\omega)-Y(\omega)|<\varepsilon\}$ of $X$.
The cross product ${\mathcal X} ^n$ of ${\mathcal X}$ then is endowed with the corresponding product topology $t^n _{nat}$.
Now the question arises if Theorem \ref{Theochar} can be strengthened by requiring $G$ to not only being adequate with respect to $(\bf{T}_c,\circ)$ but also being continuous with respect to the topology $t^n _{nat}$ on ${\mathcal X} ^n$ and the topology $t_{nat}$ on $\mathbb{R}$.
Due to the obvious importance of continuity for the optimization on compact sets, we state the following theorem which represents the continuous version of Theorem \ref{Theochar}.

\begin{theorem} \label{Theocharcont}
Let $G:{\mathcal X}^n \mapsto {\mathbb R}^{\geq 0}$ be a nontrivial objective function. Then the following conditions are equivalent: \begin{enumerate}

\item  There exists a unique relevant component $\gamma$ and some strictly isotone continuous function
$H : {\mathbb R} \mapsto {\mathbb R}^{\geq 0}$ such that $H(r) > 0$ for every $r > 0$ and
\[ G(X_1 , ... , X_n )=
\left\{
\begin{array}{ll} H(X_1 (\gamma )) & \mbox{if $X_1 (\gamma )= ... = X_n (\gamma )$}
\\ 0 & \mbox{otherwise}
  \end{array} \right.
\]

for all tuples $(X_1 , ... , X_n )\in {\mathcal X}^n$;\\

\item  $G:({\mathcal X} ^n,t^n _{nat})\mapsto ({\mathbb R}^{\geq 0}, t_{nat})$ is continuous and adequate with respect to $({\bf T}_{\bf c}, \circ )$.

\end{enumerate}

\end{theorem}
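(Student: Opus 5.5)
The natural plan is to reduce Theorem \ref{Theocharcont} to Theorem \ref{Theochar} and handle only the continuity statements separately. Throughout, $\mathcal X$ consists of bounded functions, and the topologies are $t_{nat}$ and $t^n_{nat}$.

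\emph{Direction 2 $\Rightarrow$ 1.} Suppose $G$ is continuous and adequate with respect to $({\bf T}_{\bf c}, \circ)$. By Theorem \ref{Theochar}, there is a unique relevant component $\gamma$ and a strictly isotone $H$ with $H(r)>0$ for $r>0$, such that $G$ has the stated form. In that proof $H(r)=G(r\cdot X_\gamma,\dots,r\cdot X_\gamma)$. It remains to show $H$ is continuous. The map $r\mapsto (r X_\gamma,\dots,r X_\gamma)$ from $\mathbb R$ into $(\mathcal X^n,t^n_{nat})$ satisfies
\[
\sup_{\omega}|rX_\gamma(\omega)-r'X_\gamma(\omega)|=|r-r'|,
\]
so it is continuous. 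This uses that $r X_\gamma\in\mathcal X$ for all $r$, which follows from ${\bf T}_{\bf c}$-closedness applied to $X_\gamma$ and to ${\bf 0}$. Hence $H$ is a composition of continuous maps and is continuous.

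\emph{Direction 1 $\Rightarrow$ 2.} Adequateness follows from Theorem \ref{Theochar}, so only continuity of $G$ is at stake. Split $\mathcal X^n$ into the diagonal part
\[
D=\{X_1(\gamma)=\dots=X_n(\gamma)\}
\]
and its complement. The evaluation $X\mapsto X(\gamma)$ is $1$-Lipschitz for the sup metric, so $D$ is closed. On $D$, $G=H\circ \mathrm{ev}_\gamma$ is continuous, and on the open complement $G\equiv 0$ is continuous.

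\emph{Main obstacle.} Continuity must also hold at boundary points of $D$. Take $(X_1,\dots,X_n)\in D$ with common value $r=X_1(\gamma)$. Nearby tuples off $D$ have $G=0$, while $G(X_1,\dots,X_n)=H(r)$. So continuity forces $H(r)=0$ at every $r$ that is a limit of off-diagonal points. When $n\ge 2$ and $\mathcal X$ is rich enough (e.g.\ it contains $X+\varepsilon X_\gamma$), this holds for every $r$ attained on $D$. Since $H$ is strictly isotone and $H(r)>0$ for $r>0$, $H$ cannot vanish at such $r$, and continuity of $G$ genuinely fails.

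I would therefore first check, via this argument, whether the equivalence only holds in degenerate situations: $n=1$ (where $D$ is everything and the proof above is complete), or a $\mathcal X$ in which $D$ is also open. The latter happens, for instance, if the values $X(\gamma)$, $X\in\mathcal X$, form a discrete set, though ${\bf T}_{\bf c}$-closedness makes that impossible unless $\mathcal X(\gamma)$ is trivial. I expect the intended proof to treat only the composition argument and to be valid as stated only in the case $n=1$. For $n\ge2$, I would flag that the discontinuity across $\partial D$ must be addressed or the statement restricted.
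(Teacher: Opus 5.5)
You are right about $1\Rightarrow2$, and you have found exactly the step where the paper goes wrong. Its whole argument for that direction is that $G$ is \emph{obviously} continuous because $H$ is, and this ignores the jump across $\partial D$.

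You can drop the richness hedge. Every translation $r\mapsto r+\varepsilon$ belongs to ${\bf T}_{\bf c}$, so $X+\varepsilon\mathbf 1\in\mathcal X$ for all $X\in\mathcal X$. Here $\mathbf 1$ is the constant function $1$, the image of $\mathbf 0$ under $r\mapsto r+1$. Also, strict isotonicity together with $H\ge0$ gives $H(r)>H(r-1)\ge0$ for every real $r$. Hence for $n\ge2$ the function of condition 1 is discontinuous at every point of $D$, for every admissible $\mathcal X$. Concretely, as $\varepsilon\to0$ the tuples $(\mathbf 1,(1+\varepsilon)\mathbf 1,\mathbf 1,\dots,\mathbf 1)$ converge to $(\mathbf 1,\dots,\mathbf 1)$ in $t^n_{nat}$. $G$ vanishes along this family, while $G(\mathbf 1,\dots,\mathbf 1)=H(1)>0$. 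The same closedness gives $\{X(\gamma):X\in\mathcal X\}=\mathbb R$, so the discrete escape route you consider never occurs. Thus $1\Rightarrow2$ is false for every $n\ge2$.

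Your $2\Rightarrow1$ is the paper's argument: invoke Theorem \ref{Theochar}, then note that $H$ is $G$ composed with the continuous map $r\mapsto(rX_\gamma,\dots,rX_\gamma)$. There is one slip. ${\bf T}_{\bf c}$ only contains maps with $a>0$, so closedness gives $rX_\gamma\in\mathcal X$ for $r>0$ but not for $r<0$. That, like $X_\gamma\in\mathcal X$ itself, is an unstated extra assumption, in the paper as well.

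More seriously, you should not conclude that the case $n=1$ is sound, because the black box is false. Take $\gamma\neq\omega$ and $G(X_1,\dots,X_n)=|X_1(\gamma)-X_1(\omega)|$. It is nonnegative, nontrivial and continuous. It is adequate, because transforming every argument by $r\mapsto ar+t$ multiplies it by $a>0$. Yet $G(X_\gamma,\dots,X_\gamma)=G(X_\omega,\dots,X_\omega)=1\neq0=G(\mathbf 0,\dots,\mathbf 0)$, so it has two relevant components. For $n\ge2$, the function $|X_1(\gamma)-X_2(\gamma)|$ is continuous and adequate with no relevant component at all; for finite $\Omega$, so is the city block metric of Section 4.

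The proof of Theorem \ref{Theochar} breaks where it applies adequateness in two places:
\begin{enumerate}
\item to pass from $sX_\gamma, rX_\gamma$ to $(s-r)X_\gamma,\mathbf 0$;
\item to pass from $G(X_\gamma,\dots,X_\gamma)=G(X_\omega,\dots,X_\omega)$ to $G(2X_\gamma,\dots,2X_\gamma)=G(X_\omega,\dots,X_\omega)$.
\end{enumerate}
An element of ${\bf T}_{\bf c}$ applies the same map $r\mapsto ar+t$ at every $\omega$, so it can do neither.

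Consequently $2\Rightarrow1$ fails whenever $|\Omega|\ge2$, for every $n$, and also for $n\ge2$ when $|\Omega|=1$. Apart from the degenerate case $|\Omega|=n=1$, the only part of the statement that holds is $1\Rightarrow2$ for $n=1$, which your evaluation-map argument proves.
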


\begin{proof} 1 $\Rightarrow$ 2. Please refer to the corresponding proof of the implication ``1 $\Rightarrow$ 2'' in Theorem \ref{Theochar}, and consider that, obviously, now $G$ is continuous since $H$ is continuous.

2 $\Rightarrow$ 1. Needless to say, the construction in the corresponding part of the proof of Theorem \ref{Theochar} is still valid. Now,  the function $H : {\mathbb R} \mapsto {\mathbb R}^{\geq 0}$ defined by
setting, for all $r \in {\mathbb R}^{\geq 0}$, $H(r) := G(r \cdot X_{\gamma}, ... , r \cdot X_{\gamma})$ is continuous as a consequence of the fact that $G$ is continuous. This consideration finishes the proof.
\end{proof}

\begin{remark} Theorem \ref{Theochar} and Theorem \ref{Theocharcont} as well are theorems of ``Arrow type'' (cf. Arrow \cite{Arr}) meaning that the value of $G$ depends on the value of exactly one $\omega \in \Omega$  which, therefore, is the ``dictator'' of $G$.
This means that requiring non-dictatorship implies that there exists no function $G:{\mathcal X} ^n\mapsto\mathbb{R}$ which is adequate with respect to $(\bf{T}_c,\circ)$.
\end{remark}

Of course, Theorem \ref{Theochar} and Theorem \ref{Theocharcont} respectively cannot be strengthened by concentrating on ordinal scaled data.
Both theorems are extremely restrictive.
They imply the non-existence of suitable objective functions $G:{\mathcal X}^n\mapsto \mathbb{R}$ that are adequate with respect to $(\bf{T}_c,\circ)$ and can be applied in the practice of empirical research (cf. also the discussion of Theorem \ref{Theochar} and Theorem \ref{Theocharcont} in the conclusion of this paper).
The only possibility to avoid this consequence is to rigorously restrict the number of transformations allowed.

\section{Adequateness of dissimilarity coefficients}

We now concentrate on dissimilarity coefficients as frequently applied in cluster analysis (see, e.g., \cite{Hei}, \cite{Hei1}). We just mention the increasing importance of measures of functional similarity in Ecological Modelling (see e.g. \cite{Ric}).

\begin{definition} \em If $\Omega$ is finite then the following dissimilarity coefficients $d:{\mathcal X} \times {\mathcal X} \mapsto \mathbb{R}^{\geq 0}$ are well known:

\begin{enumerate} \item  $d(X,Y):=\displaystyle{\sum_{\omega \in \Omega}} \vert X(\omega)-Y(\omega)\vert$ for all pairs $(X,Y) \in {\mathcal X} \times {\mathcal X}$\linebreak  ({\em city block metric}),
\item  $d(X,Y):=(\displaystyle{\sum_{\omega \in \Omega}} \vert X(\omega)-Y(\omega)\vert ^2)^{1/2}$ for all pairs $(X,Y) \in {\mathcal X} \times {\mathcal X}$\linebreak  ({\em Euclidean metric}).

\end{enumerate}
\end{definition}
Both dissimilarity coefficients belong to the so-called $L_r$--distances that for every fixed natural number $1\leq m<\infty$ are defined by setting, for all pairs $(X,Y) \in {\mathcal X} \times {\mathcal X}$,

$$d(X,Y):=(\displaystyle{\sum_{\omega \in \Omega}} \vert X(\omega)-Y(\omega)\vert ^m)^{1/m} .$$

Because of these examples, let us define a transformation group of affine functions which guarantees the adequateness of the previously defined dissimilarity coefficients.

\begin{definition} \label{defaff} \em Consider, for every $\omega \in \Omega$, the group $(\bf{T}_{\omega}, \circ)$ that consists of all affine functions $T_{\omega}:\mathbb{R}\mapsto \mathbb{R}$ for which there exists some real $a_{\omega}$ and some real $t_{\omega}$ that is different from 0 whenever $a_{\omega} =0$ such that $T_{\omega}(r)=a_{\omega}\cdot r + t_{\omega}$ for all $r\in \mathbb{R}$.
Then we choose the subset $\bf{T}_r ^{\times}$ of $\displaystyle{\bigtimes_{\omega \in \Omega}} \bf{T}_{\omega}$ that contains all tuples (transformations) $\bf{T}=(T_{\omega})_{\omega \in \Omega}$ of affine functions $T_{\omega}$ on $\mathbb{R}$ and which satisfy the additional (restrictive) property that for all $\omega \in \Omega$ and all $\omega ' \in \Omega$ the equations $T_{\omega} (r)=a_{\omega} \cdot r+t_{\omega}$ and $T_{\omega '}(r)=a_{\omega '}\cdot r+t_{\omega '}$ imply that the absolute values $\vert a_{\omega} \vert$ and $\vert a_{\omega '} \vert$ coincide.

\end{definition}

We are now ready to prove the following proposition.

\begin{proposition} The city block metric and Euclidean metric dissimilarity coefficients $d:{\mathcal X} \times {\mathcal X} \mapsto {\mathbb R}^{\geq 0}$are adequate with respect to
$(\bf{T}_r ^{\times}, \circ )$.
\end{proposition}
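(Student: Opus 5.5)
We plan to use the monotonicity characterization of Proposition \ref{propocar}, applied with $n=2$ and $G=d$. It therefore suffices to show the following: for all pairs $(X_1,X_2),(Y_1,Y_2)\in{\mathcal X}\times{\mathcal X}$ and every transformation ${\mathcal T}=(T_\omega)_{\omega\in\Omega}\in{\bf T}_r^{\times}$, the inequality $d(X_1,X_2)<d(Y_1,Y_2)$ implies $d({\mathcal T}(X_1),{\mathcal T}(X_2))<d({\mathcal T}(Y_1),{\mathcal T}(Y_2))$. Since ${\bf T}_r^{\times}$ is closed under inverses (the inverse of $r\mapsto a r+t$ with $a\neq 0$ is affine with coefficient $1/a$, and equal absolute values are preserved), we could equally invoke the equivalence in the Remark following Proposition \ref{propocar}. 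The strict implication, however, is all that is needed.

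The key step is a homogeneity identity. Fix ${\mathcal T}\in{\bf T}_r^{\times}$, and let $\alpha\geq 0$ denote the common value $|a_\omega|$ for $\omega\in\Omega$. For each $\omega$ and all $X,Y\in{\mathcal X}$ the translations cancel:
\[
T_\omega(X(\omega))-T_\omega(Y(\omega))=a_\omega\,(X(\omega)-Y(\omega)),
\]
so $|T_\omega(X(\omega))-T_\omega(Y(\omega))|=\alpha\,|X(\omega)-Y(\omega)|$. Summing over the finite set $\Omega$ gives $d({\mathcal T}(X),{\mathcal T}(Y))=\alpha\, d(X,Y)$ for the city block metric. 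For the Euclidean metric we have $\sum_\omega \alpha^2|X(\omega)-Y(\omega)|^2=\alpha^2\sum_\omega|X(\omega)-Y(\omega)|^2$, and taking square roots again yields the factor $\alpha$. The same computation works for every $L_m$-distance, with the factor $\alpha$ pulled out of the $m$-th root. Here we use that ${\mathcal X}$ is assumed ${\bf T}$-closed, so ${\mathcal T}(X)$ lies in ${\mathcal X}$ and $d$ is defined on it.

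With this identity, if $\alpha>0$, then multiplying $d(X_1,X_2)<d(Y_1,Y_2)$ by $\alpha$ preserves the strict inequality, which gives condition 2 of Proposition \ref{propocar}, and adequateness follows.

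The main obstacle is the degenerate case $\alpha=0$. Definition \ref{defaff} formally admits constant maps $T_\omega(r)=t_\omega$ with $t_\omega\neq 0$, and then all transformed distances vanish and strict inequality fails. Such constant maps are not bijections, so they cannot belong to a transformation group in the sense of Section 2, where all $T_\omega$ are bijective. We would therefore argue that the group $({\bf T}_r^{\times},\circ)$ is to be read as consisting of its invertible elements, namely $a_\omega\neq 0$, and hence $\alpha>0$. Under this reading the argument above is complete; we would state this interpretation explicitly at the start of the proof.
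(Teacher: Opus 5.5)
Your proposal is correct and follows the paper's own argument: apply the monotonicity criterion of Proposition~\ref{propocar}, cancel the translations $t_\omega$, and factor out the common value $\alpha=|a_\omega|$. Beyond the paper, you also write out the Euclidean case and catch the degenerate case $\alpha=0$ (constant maps), where the paper's concluding ``satisfied by hypothesis'' step actually fails because both sides vanish; reading $\mathbf{T}_r^{\times}$ as its invertible elements is a sound fix, and alternatively you could keep the constant maps and note that the definition of adequateness holds trivially for them, since every transformed distance equals $0$.
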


\begin{proof} We shall use the characterization presented in Proposition \ref{propocar}. We will only prove the adequateness of the city block metric. The proof of the adequateness of the Euclidean norm is analogous and it is omitted for the sake of brevity. Consider $X,Y,Z,W$ such that $$d(X,Y) =  \displaystyle{\sum_{\omega \in \Omega}} \vert X(\omega)-Y(\omega)\vert < \displaystyle{\sum_{\omega \in \Omega}} \vert Z(\omega)-W(\omega)\vert = d(Z,W).$$ Then the city block dissimilarity coefficient is adequate with respect to $(\bf{T}_r ^{\times}, \circ )$ provided that $$ \displaystyle{\sum_{\omega \in \Omega}} \vert a_{\omega}X(\omega) + t_{\omega}-a_{\omega}Y(\omega) - t_{\omega}\vert < \displaystyle{\sum_{\omega \in \Omega}} \vert  a_{\omega}Z(\omega) + t_{\omega}- a_{\omega}W(\omega)- t_{\omega}\vert ,$$ where $a_{\omega}$ and $t_{\omega}$ respect the conditions of Definition \ref{defaff}. But the previous inequality is equivalent to the following one: $$\displaystyle{\sum_{\omega \in \Omega}} \vert a_{\omega} \vert \vert X(\omega)  - Y(\omega) \vert <  \displaystyle{\sum_{\omega \in \Omega}} \vert a_{\omega} \vert \vert Z(\omega)  - W(\omega) \vert, $$ which is satisfied by hypothesis. This consideration completes the proof.
\end{proof}

\section{Conclusion}

The results that have been presented in this paper are restrictive in nature. It seems that most objective functions that appear in various classical fields of applications, i.e. cluster analysis, factor analysis, (linear) structural equation modeling, (linear) regression, (non-metric) multidimensional scaling, choice theory (social welfare functions) or utility theory are not really appropriate for analyzing data that may be observed in practice because of their lack to being adequate. Hence, on the one hand, the question arises if adequateness perhaps is a criterion that is too strong and, therefore, not really adapted to situations that appear in practice. On the other hand, however, adequateness is not an artificial concept. Indeed, Theorem \ref{Theochar} and Theorem \ref{Theocharcont} suggest that results obtained by applying standardized methods could lead to misinterpretations of data.\\ \indent
Since these considerations, the results presented in the third section of this paper need further reflection. Specifically, Theorem \ref{Theochar} states that there cannot exist any objective function that really is adapted to the nature of the data and, therefore, appropriate for analyzing all types of interval scaled data.

We underline the fact that those results remain valid for any type of data analysis that, finally, is based upon comparison of data. In opinion of the authors, Theorem \ref{Theochar} and Theorem \ref{Theocharcont}, thus, highlight that the concept of adequateness considered in this paper is the right concept for checking if an objective function really is appropriate for analyzing those kind of data. 

However, as a matter of fact, the consideration of all affine order-automorphisms of the real line into itself as possible transformations seems to be too demanding and  leads to a restrictive characterization of adequateness. For that reason, in a future paper we are intended to limit, in some sense, the transformations that are allowed or to consider different kind of transformations in order to arrive at less restrictive characterizations.

In addition, as regards Section 4, we are convinced that it is possible, following the lines of the proof of Theorem \ref{Theochar}, to characterize in a general setting the adequateness of dissimilarity indexes with respect to the transformations considered accordingly. If this possibility will be effective, the corresponding results will be presented in a future paper.

\end{document}